\newtheorem{theorem}{Theorem}[section]
\newtheorem{lemma}[theorem]{Lemma}
\DeclareMathOperator{\real}{Re}
\newcommand{\R}{\mathbb{R}}
\begin{document}
\begin{titlepage}
\begin{center}

 \vspace{-0.7in}

{\large \bf  Disordered Field Theory in $d=0$ \\
\vspace{.06in} and  \\
\vspace{.06in}
Distributional Zeta-Function}\\

\vspace{.5in}
{\large\em
B. F. Svaiter,\,\,\footnotemark[1]  N. F. Svaiter\,\,\footnotemark[2]}\\
\vspace{.08in}

Instituto de Matem\' atica Pura e Aplicada - IMPA \footnotemark[1] \\
Estrada Dona Castorina 110 Rio de Janeiro.
 RJ 22460-320, Brazil\\
\vspace{.06in}

Centro Brasileiro de Pesquisas F\'\i sicas - CBPF \footnotemark[2]\\
Rua Dr. Xavier Sigaud 150
Rio de Janeiro, RJ,22290-180, Brazil\\

\subsection*{\\Abstract}
\end{center}

Recently we introduced a new technique for computing 
the average free
energy of a system with quenched randomness.
The basic tool of this technique is
a distributional zeta-function. 
The distributional zeta-function is a complex function
whose derivative at the origin yields the average free energy of the system as the sum of two
contributions: the first one is
a series in which all the
integer moments of the partition function of the model contribute; the second one, which can not be
written as a series of the integer moments, can be made as small as desired.
In this paper we present a mathematical rigorous proof that the average free energy of one
disordered $\lambda\varphi^{4}$ model defined in
a zero-dimensional space can be obtained using the distributional zeta-function technique.
We obtain an analytic expression for the average free energy of the model.

\bigskip

\vspace{.06in}

\noindent
{\sc keywords:} disordered systems; average free-energy; replicas; distributional zeta-function.

\vspace{.06in}

\footnotetext[1]{e-mail:\,\,benar@impa.br}
\footnotetext[2]{e-mail:\,\,nfuxsvai@cbpf.br}
\vspace{.09in}
PACS numbers: 05.20.-y,\,75.10.Nr

\end{titlepage}
\newpage\baselineskip .18in

%
%
\quad

Disordered systems have been investigated for decades in statistical mechanics
\cite{binder,livro1,livro2,livro3,livro4}, 
gravitational physics \cite{pe2,pe10,pe11,dis1,dis2,dis3}, number theory \cite{jpa} and condensed matter. 
For the case of disordered systems with quenched disorder, one is mainly interested in
averaging the free energy over the disorder, which amounts to averaging the log of the partition function $Z$.
The central problem is how to average the disorder dependent free energy over the ensemble of configurations of the disorder
degrees of freedom. The main approaches to obtain the average free energy are the dynamical approach \cite{dynamical},
the cavity method \cite{cavity,livromez} and also the replica method \cite{edwards}. 
In the replica method, the expected value of the partition function's $k$-th power $Z_k=\mathbb{E}\, Z^k$ is 
computed by integrating over the disorder field and the average free energy is obtained using the formula
 $\mathbb{E}\,{\ln Z}=
\lim_{k\rightarrow 0}\frac{Z_{k}-1}{k}$, where $Z_{k}$ for $0<k<1$ is derived from its values for $k$ integer.
This approach sometimes requires a  symmetry breaking procedure to yield physical sound results \cite{pa1,pa2,pa3}.

Despite the success in the application of the replica method in disordered systems,
some authors consider that a mathematical rigorous derivation to support this procedure is still lacking
\cite{cri1,cri2,cri4,cri3}. 
It is therefore natural
to ask whether there exists
a mathematically rigorous method, based on the use of replicas, for computing the
average free energy of systems with quenched disorder.
In Ref. \cite{sum}, Dotsenko considered an alternative approach where the summation of
all integer moments of the partition function is used
to evaluate the average free energy of the random energy model. Also 
a replica calculation using only the integer moments of the
partition function have been considered in Ref. \cite{virasoro}.
In this paper we present a mathematical rigorous use of a new procedure to find the average free energy   
in systems with quenched disorder \cite{distributional}. 

We associate with systems with quenched disorder a complex function 
which, 
due to its similarities with zeta-functions, we call distributional zeta-function,
obtained by an integral with respect to a probability distribution.
We will show that the derivative of the distributional zeta-function at the origin
yields the
average free energy of the underlying system with quenched disorder.
For simplicity we study a disordered zero-dimensional field theory model, which allows a rigorous 
use of our method. Nevertheless, all the computations can be formally extended to more complex physical models where
the support of the  disorder probability distributions are infinite-dimensional spaces. 
In our derivation the average free energy of the zero-dimensional disordered
$\lambda\varphi^{4}$ model is given by the sum of two terms, a series in which all the integer 
moments of the partition function contribute and  a term that can not be
written as a series of the integer moments, but can be made as small as desired.

We would like to point out that zero-dimensional models has been widely studied in field theory \cite{argyres,rivasseau}. For instance,
it is well known that many perturbative series in quantum field theory has zero radius of convergence \cite{dyson,lipatov}. Although this zero radius of
convergence is encountered in many models in quantum field theory models in a four-dimensional space-time,
the nature of the perturbative expansion has been investigated also in systems with disorder is smaller dimensions. 
The structure of the perturbative
expansion of system without \cite{sce2,sce1,sce3} or with quenched disorder in models in zero dimensions was investigated in many papers 
\cite{sum1,sum2}. Zero-dimensional models appear in the strong-coupling expansion in field theory
\cite{s1,s2,s3,s4}, and also is closed related with matrix models since gauge theories in zero-dimensions are described by matrix models \cite{thooft, brezin}.  

The organization of this paper is as follows.
In section II we
introduce the distributional zeta-function
and compute the derivative of the distributional zeta-function at the origin in order to obtain the
average free energy of the system with quenched disorder. 
Conclusions are given in section III. 
We use $\hbar=c=k_{B}=1$.

\section{The distributional zeta-function in disordered models}
\quad

In the zero-dimensional scalar $\lambda\varphi^4$ model, without disorder, the action and the 
partition function are, respectively,
\begin{equation}
S(\varphi)=\frac{1}{2}
 m_{0}^{2}\varphi^2
+\frac{\lambda}{4!}\varphi^{4}\;\;\;\text{ and }\;\;\; Z=\int d\varphi\exp(-S(\varphi)).
\label{dis1}
\end{equation}
%
%
The normalized correlation functions of this model are just weighted integrals of kind
\begin{equation*}
\langle\,f\,\rangle=\frac{1}{Z}\int\,d\varphi\,f(\varphi)\,\exp(-S(\varphi)).
\end{equation*}
In the presence of a disorder $h$ linearly coupled with $\varphi$,
the action and partition function becomes
\begin{equation}
S(h,\varphi)=S(\varphi)+h\varphi,\qquad
\end{equation}
and
\begin{equation}
Z(h)=\int d\varphi\exp\left(-\frac{1}{2}
 m_{0}^{2}\varphi^2
-\frac{\lambda}{4!}\varphi^{4}-h\varphi\right).
\label{dis1}
\end{equation}

The disorder dependent free energy i.e., the $h$-dependent free energy $F(h)$ is given by
\begin{equation}
F(h)=\ln\,Z(h).
\label{fe}
\end{equation}
Let $\mu$ be the  probability distribution of the disorder, that is,
$\mu$ is a Borel measure in $\mathbb{R}$ and $\mu(\R)=1$. 
If
$\mu$ happens to have a probability density function $P(h)$, the probability distribution can be written as 
\[
d\mu(h)=dh P(h).
\]
A widely used distribution is the normal distribution
\begin{align*}
d\mu=dh\dfrac{1}{\sqrt{2\pi\sigma}}\exp\left(-\dfrac{(h-m)^2}{2\sigma}\right),
\end{align*}
nevertheless, we can consider more general probabilities, as for example, discrete ones.  
The average free energy $F_{q}$ is defined as
\begin{equation}
F_{q}=\int\,d\mu(h)\,F(h)=\int\,d\mu(h)\,\ln Z(h).
\label{sa27}
\end{equation}
To evaluate this integral, we will resort to a zeta-function method, 
extending to this context 
a useful  procedure of quantum field theory \cite{elizalde}.

Recall that a measure space $(\Omega,\mathcal{W},\eta)$ consist in a set $\Omega$,
a $\sigma$-algebra $\mathcal{W}$ in $\Omega$, and a measure $\eta$ in this
$\sigma$-algebra.
Given a measure space $(\Omega,\mathcal{W},\eta)$
and a measurable $f:\Omega\to(0,\infty)$, we define the associated  generalized $\zeta$-function as
\begin{equation*}
\zeta_{\,\eta,f}(s)=\int_\Omega f(\omega)^{-s}\, d\eta(\omega)
\end{equation*}
for those $s\in\mathbb{C}$ such that  $f^{-s}\in L^1(\eta)$,
where in the above integral $f^{-s}=\exp(-s\log(f))$ is obtained using the
principal branch of the logarithm.
This formalism encompasses some well-known instances of zeta-functions:
\begin{enumerate}
\item if $\Omega=\R_+$, $\mathcal{W}$ is the Lebesgue $\sigma$-algebra,  $\eta$  is the Lebesgue measure,
and $f(\omega)=\lfloor \omega\rfloor$ we retrieve the classical Riemann zeta-function \cite{riem,riem2};
\item if $\Omega$ and $\mathcal{W}$ are as in item 1, $\eta(E)$ counts the prime numbers in $E$ and $f(\omega)=\omega$
we retrieve  the prime zeta-function \cite{landau,fro,primes0,primes};
\item if $\Omega$, $\mathcal{W}$, and $f$ are as in item 2 and $\eta(E)$ counts the non-trivial zeros of the Riemann zeta-function, with
their respective multiplicity, we obtain the families of superzeta-functions \cite{voros}.
\item if $\Omega$, $\mathcal{W}$, and $f$ are as in item 2 and $\eta(E)$ counts the eigenvalues of an elliptic operator, with
their respective multiplicity, we obtain the spectral zeta-function \cite{seeley,ray,hawking,dowker,fulling}.

\end{enumerate}
Further extending this formalism to the case where $f(h)=Z(h)$ and $\eta=\mu$,  the probability distribution of $h$,
leads
to the definition of the distributional zeta-function $\Phi(s)$,
\begin{equation}
\Phi(s)=\int d\mu(h)\frac{1}{Z(h)^{s}}
\label{pro1}
\end{equation}
for $s\in \mathbb{C}$, this function being defined in the region where the
above integral converges. Before continue, we would like to point out that the free energy of the system with annealed disorder is given by
$$F_{a}=-\ln\,\Phi(s)|_{s=-1}.$$ 
The next technical lemma will be used to study the domain of definition of the distributional zeta-function.

\begin{lemma}\label{lm:t0}
For any $h\in \mathbb{R}$,  $Z(h) \geq Z(0)>0$.
\end{lemma}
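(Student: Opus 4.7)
The plan is to exploit the fact that the non-disordered action $S(\varphi) = \tfrac{1}{2}m_0^2\varphi^2 + \tfrac{\lambda}{4!}\varphi^4$ is an even function of $\varphi$, together with the elementary inequality $\cosh(x) \geq 1$. The positivity $Z(0) > 0$ is trivial since the integrand $\exp(-S(\varphi))$ is strictly positive and integrable (the quartic term ensures convergence), so the whole content lies in showing $Z(h) \geq Z(0)$.

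First I would split the integration domain in \eqref{dis1} into $(-\infty,0]$ and $[0,\infty)$, and in the first piece change variables $\varphi \mapsto -\varphi$. Because $S$ is even while the disorder coupling $h\varphi$ is odd, the two half-line integrals combine to give
\begin{equation*}
Z(h) = \int_0^\infty d\varphi\, e^{-S(\varphi)}\bigl(e^{-h\varphi} + e^{h\varphi}\bigr) = 2\int_0^\infty d\varphi\, e^{-S(\varphi)}\cosh(h\varphi).
\end{equation*}
The same manipulation applied at $h=0$ yields $Z(0) = 2\int_0^\infty d\varphi\, e^{-S(\varphi)}$.

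Second, I would invoke $\cosh(h\varphi) \geq 1$ for every real $h$ and $\varphi$, and integrate this pointwise inequality against the positive measure $e^{-S(\varphi)}\,d\varphi$ on $[0,\infty)$ to conclude $Z(h) \geq Z(0)$. Combined with $Z(0) > 0$, this gives the claim.

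I do not expect any genuine obstacle here: the estimate is a one-line consequence of the symmetry of $S$ and the positivity of $\cosh - 1$. The only thing to double-check is that all integrals in question are finite, which follows at once from the quartic growth of $S(\varphi)$ dominating the linear term $h\varphi$ at infinity, so Fubini/dominated convergence are available if needed to justify the splitting and the change of variables.
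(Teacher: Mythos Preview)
Your proof is correct and follows essentially the same approach as the paper: both exploit the evenness of $S(\varphi)$ to rewrite $Z(h)$ with a factor $\cosh(h\varphi)$ and then invoke $\cosh\geq 1$. The only cosmetic difference is that the paper obtains the $\cosh$ representation over the whole real line via the symmetry $Z(h)=Z(-h)=(Z(h)+Z(-h))/2$, whereas you obtain it over $[0,\infty)$ by splitting the domain and substituting $\varphi\mapsto-\varphi$.
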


\begin{proof}
In view of Eq. \eqref{dis1}, only the first inequality needs to be proved.
Take $h\in\mathbb{R}$. 
Since 
$Z(h)=Z(-h)=(Z(h)+Z(-h))/2$, 
again in view of Eq. \eqref{dis1}
we can write
\begin{equation*}
  Z(h)=\int d\varphi\,\, \exp\left(-\frac{1}{2}
 m_{0}^{2}\varphi^2
-\frac{\lambda}{4!}\varphi^{4}\right) \;
  \cosh\left(
h\varphi\right),
\end{equation*}
which trivially implies the desired inequality.
\end{proof}

Now we will prove, without further assumptions on $\mu$, that
$\Phi(s)$ is well defined in the half complex plane $\real(s) \geq 0$.

\begin{theorem}
The distributional zeta-function  $\Phi(s)$ specified in Eq. \eqref{pro1} is well defined and continuous for $\real(s) \geq 0$. 
\end{theorem}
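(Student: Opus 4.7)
The plan is to prove the two assertions (well-definedness of the integral and continuity of $\Phi$) separately, but both will rest on the same bound: for $\real(s) \geq 0$ and $h\in\mathbb{R}$,
\begin{equation*}
\bigl|Z(h)^{-s}\bigr| \;=\; Z(h)^{-\real(s)} \;\leq\; \max\bigl(1,\,Z(0)^{-\real(s)}\bigr).
\end{equation*}
The equality is immediate from the principal-branch definition $Z(h)^{-s}=\exp(-s\log Z(h))$ together with the fact that $Z(h)>0$. The inequality uses Lemma \ref{lm:t0}: since $Z(h)\geq Z(0)>0$ and the map $x\mapsto x^{-\real(s)}$ is nonincreasing on $(0,\infty)$ when $\real(s)\geq 0$, one has $Z(h)^{-\real(s)}\leq Z(0)^{-\real(s)}$; when $Z(0)\geq 1$ the tighter bound $Z(h)^{-\real(s)}\leq 1$ holds. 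In either regime the right-hand side is a constant, hence $\mu$-integrable because $\mu(\mathbb{R})=1$.

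For well-definedness, I would first note that $h\mapsto Z(h)$ is continuous (in fact smooth), so $h\mapsto Z(h)^{-s}$ is Borel-measurable, and the bound above shows $Z(\cdot)^{-s}\in L^{1}(\mu)$ for every $s$ with $\real(s)\geq 0$. Therefore $\Phi(s)$ is defined by a convergent integral on the closed right half-plane.

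For continuity at an arbitrary $s_{0}$ with $\real(s_{0})\geq 0$, I would invoke the dominated convergence theorem. Given any sequence $s_{n}\to s_{0}$ with $\real(s_{n})\geq 0$, the real parts are eventually bounded by some $M<\infty$, so
\begin{equation*}
\bigl|Z(h)^{-s_{n}}\bigr| \;\leq\; \max\bigl(1,\,Z(0)^{-M}\bigr)
\end{equation*}
uniformly in $n$ and $h$; this constant dominant is integrable against $\mu$. Pointwise in $h$, continuity of the exponential gives $Z(h)^{-s_{n}}\to Z(h)^{-s_{0}}$, whence $\Phi(s_{n})\to\Phi(s_{0})$ by dominated convergence.

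There is no real obstacle: the only subtle point is realizing that the lemma is precisely what turns $Z(h)^{-\real(s)}$ into a uniformly bounded integrand, so the finiteness of $\mu$ suffices and no assumption on the tails of $\mu$ is required. The assertion is therefore valid for every probability distribution of the disorder.
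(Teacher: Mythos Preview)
Your proposal is correct and follows essentially the same approach as the paper: both use Lemma~\ref{lm:t0} to obtain a uniform bound $|Z(h)^{-s}|\leq \max\{1,Z(0)^{-\real(s)}\}$ (the paper writes $C_M=\max\{1,1/Z(0)^M\}$ on the strip $0\le\real(s)\le M$), then apply Lebesgue's dominated convergence theorem for continuity. Your explicit mention of measurability and the sequential formulation of continuity are minor presentational differences, not substantive ones.
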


\begin{proof}
It follows from Lemma~\ref{lm:t0} that for $\real(s) \geq 0$
\begin{equation*}
  \int d\mu(h)\,\left|\frac{1}{Z(h)^{s}}\right|\leq\int d\mu(h)
  \frac{1}{Z(0)^{\real(s)}}=
   \dfrac{1}{Z(0)^{\real(s)}}<\infty.
\end{equation*}
Therefore, the integral in Eq.\ \eqref{pro1} is convergent in the half
complex plane $\real(s)\geq 0$
and $\Phi$ is well defined in this region, \emph{without} resorting to analytic continuations.
Take  $M>0$ and define $C_M=\max\{1,1/Z(0)^M\}$. 
Using again Lemma~\ref{lm:t0} 
we conclude that
\begin{equation*}
\left|\dfrac{1}{Z(h)^s}\right| \leq C_M \;\;\;\text{ for }\;\;\; 0\leq \real(s)\leq M.
\end{equation*} 
Since $s\mapsto 1/Z(h)^s$ is continuous (for each $h$) and $\int d\mu(h)\,C_M=C_M<\infty$,
it follows from the above inequality and Lebesgue's dominated convergence theorem that
$\Phi(s)$ is continuous in the strip $0\leq \real(s)\leq M$. As $M>0$ is arbitrary, $\Phi(s)$
is continuous for $\real(s)\geq 0$.
\end{proof}

Next we show that the average fee energy can be retrieved from $\Phi$.
%
%
%
From now on 
\[
 \left.\dfrac{d}{ds}f(s)\right|_{s=0^+}=\lim_{s\to 0^+}
 \frac{f(s)-f(0)}{s}
\]
whenever this limit exists.

\begin{theorem}
\label{th1}
If $F_q$, the average free energy defined by Eq. \eqref{sa27}, is well defined,
then $(d/ds)\Phi(s)|_{s=0^+}$ exists and
\begin{align}
F_{q}=-\int d\mu(h)\left.\frac{d}{ds}\frac{1}{Z(h)^{s}}\right|_{s=0^{\,+}}
=-\left.\frac{d}{ds}\Phi(s)\right|_{s=0^{\,+}}.
\label{sa22}
\end{align}
\end{theorem}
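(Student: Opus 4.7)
The plan is to write $(\Phi(s)-\Phi(0))/s$ as an integral, find a $\mu$-integrable dominating function uniform in small $s>0$, and apply Lebesgue's dominated convergence theorem to pass the limit inside the integral.

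First I would use that $\mu$ is a probability measure, so $\Phi(0)=1$ and
\[
\frac{\Phi(s)-\Phi(0)}{s}=\int d\mu(h)\,\frac{Z(h)^{-s}-1}{s}.
\]
For each fixed $h$, the function $s\mapsto Z(h)^{-s}=\exp(-s\ln Z(h))$ is smooth, with derivative at $s=0$ equal to $-\ln Z(h)=-F(h)$; hence the integrand converges pointwise to $-F(h)$ as $s\to 0^+$.

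The main obstacle is producing a single $\mu$-integrable function that dominates $(Z(h)^{-s}-1)/s$ uniformly for $s$ in some fixed neighborhood of the origin. I would apply the mean value theorem to $s\mapsto Z(h)^{-s}$ on $[0,s]$, obtaining
\[
\left|\frac{Z(h)^{-s}-1}{s}\right|=|\ln Z(h)|\,Z(h)^{-\xi}
\]
for some $\xi\in(0,s)$. Restricting to $s\in(0,1]$, one has $Z(h)^{-\xi}\leq \max(1,Z(h)^{-1})$, and Lemma~\ref{lm:t0} then furnishes the uniform bound $Z(h)^{-\xi}\leq C$ with $C:=\max(1,1/Z(0))$. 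Consequently
\[
\left|\frac{Z(h)^{-s}-1}{s}\right|\leq C\,|F(h)|\qquad\text{for every }s\in(0,1],
\]
and under the standing hypothesis that $F_q$ is well defined, interpreted as $F\in L^1(\mu)$, the right-hand side is $\mu$-integrable.

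The final step is a direct application of dominated convergence: the pointwise limit $-F(h)$ combined with the dominating function $C|F(h)|$ justify the interchange
\[
\left.\frac{d}{ds}\Phi(s)\right|_{s=0^+}=\lim_{s\to 0^+}\int d\mu(h)\,\frac{Z(h)^{-s}-1}{s}=\int d\mu(h)\,(-F(h))=-F_q,
\]
which yields both equalities in~\eqref{sa22}. Once the uniform bound on $Z(h)^{-\xi}$ via Lemma~\ref{lm:t0} is in hand, everything else is routine.
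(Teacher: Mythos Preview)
Your proposal is correct and follows essentially the same route as the paper: apply the mean value theorem to $s\mapsto Z(h)^{-s}$, use Lemma~\ref{lm:t0} to bound $Z(h)^{-\xi}$ by $\max\{1,1/Z(0)\}$ uniformly for $0<s<1$, obtain the integrable majorant $C|\ln Z(h)|$, and invoke dominated convergence. The only cosmetic difference is that you spell out $\Phi(0)=1$ and the pointwise limit explicitly at the outset, whereas the paper jumps straight to the difference quotient bound.
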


\begin{proof}
Only the second equality needs to be proved.
To prove it,
take $1>s>0$ and $h\in\mathbb{R}$. 
It follows from the mean value theorem that there is $0<\theta<1$ such that
\begin{align*}
\left|\dfrac{Z(h)^{-s}-1}{s}\right|=\left|Z(h)^{-\theta s}\ln Z(h) \right|.
\end{align*}
Direct use of the inequality
$Z(h) \geq Z(0)>0$ yields
\begin{align*}
\left|Z(h)^{-\theta s}\ln Z(h) \right|
=\left|{Z(0)}^{-\theta s}\ln Z(h)\right|
\leq|\ln Z(h)|\max\{1,Z(0)^{-1}\}.
\end{align*}
Hence, 
\begin{align*}
\left|\dfrac{Z(h)^{-s}-1}{s}\right|\leq |\ln Z(h)|\max\{1,Z(0)^{-1}\}\in L^1(\mu)
\;\;\text{ for } 0<s<1,
\end{align*}
where the inclusion follows from the assumption of the integral 
on Eq. \eqref{sa27} being well defined.
It follows from the above equation and Lebesgue dominated convergence theorem that
\begin{equation*}
\lim_{s\to 0^+}\int d\mu(h)\dfrac{Z(h)^{-s}-Z(h)^0}{s}
=\int d\mu(h)\lim_{s\to 0^+}\dfrac{Z(h)^{-s}-Z(h)^0}{s}
\end{equation*}
and the conclusion follows.
\end{proof}

From now on we assume that the average free energy as specified in Eq. \eqref{sa27}
is well defined, so that this physical quantity can be obtained by Eq.
\eqref{sa22},
%
Note that Eqs. \eqref{pro1} and  \eqref{sa22} provides an analytic expression
for  $F_q$ which \emph{does not require derivation
of the (integer) moments of the partition function}. 

To obtain from Eq. \eqref{sa22} a new expression for the average free
energy, we will derive another integral representation for the
distributional zeta-function. Direct use of Euler's integral
representation for the gamma function give us
\begin{equation*}
\frac{1}{Z(h)^{s}}=\frac{1}{\Gamma(s)}
\int_{0}^{\infty}dt\,t^{s-1}e^{-Z(h)t},\,\,\,\,\, \text{for}\,\,\,\, \real\,(s)>0.
\end{equation*}
%
%
Substituting the above equation in Eq.\ \eqref{pro1} we get
\begin{equation}
\Phi(s)=\dfrac{1}{\Gamma(s)}\int d\mu(h)\int_0^\infty dt\,t^{s-1}e^{-Z(h)t}, \;\;\text{ for }\;\;
\real(s)>0.
\label{pro1.b}
\end{equation}

Recall that the integer moments of the partition function are 
\begin{align*}
\mathbb{E}\,Z^k=\int d\mu(h)\, Z(h)^k\qquad k=1,2,\dots
\end{align*}
The family of integer moments of $Z$ are called in the literature the replica partition function.
If the probability distribution $\mu$ has compact support, which is to say 
that $\mu\left(\mathbb{R}\setminus[-r,r]\right)=0$ for $r$ large enough, then
there is a geometric bound for the growth of the partition function
moments, as proved in the next lemma.
\newpage

\begin{lemma}
\label{lm:bm}
If $\mu$ has compact support, then there exists $\alpha,\beta>0$ such that
$\mathbb{E}\,Z^k \leq \alpha\,\beta^k$
for any $k$.
\end{lemma}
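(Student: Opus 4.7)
The plan is to reduce the bound on moments of $Z$ to a uniform upper bound on $Z(h)$ over the support of $\mu$. Since $\mu$ is a probability measure with compact support, there is $r>0$ with $\mu(\mathbb{R}\setminus[-r,r])=0$. If I can show $Z(h)\leq\beta$ for every $h\in[-r,r]$, then
\begin{equation*}
\mathbb{E}\,Z^k=\int d\mu(h)\,Z(h)^k\leq \beta^k\int d\mu(h)=\beta^k,
\end{equation*}
so the conclusion holds with $\alpha=1$ and $\beta$ as constructed.

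To obtain the uniform bound on $Z(h)$, I would reuse the symmetrization already exploited in Lemma~\ref{lm:t0}: since $Z(h)=Z(-h)$, we have
\begin{equation*}
Z(h)=\int d\varphi\,\exp\!\left(-\tfrac{1}{2}m_0^2\varphi^2-\tfrac{\lambda}{4!}\varphi^4\right)\cosh(h\varphi).
\end{equation*}
For any $h\in[-r,r]$, monotonicity gives $\cosh(h\varphi)\leq\cosh(r|\varphi|)\leq\exp(r|\varphi|)$, so
\begin{equation*}
Z(h)\leq \beta:=\int d\varphi\,\exp\!\left(-\tfrac{1}{2}m_0^2\varphi^2-\tfrac{\lambda}{4!}\varphi^4+r|\varphi|\right).
\end{equation*}
This integral is finite because, for $\lambda>0$, the quartic term dominates the linear term at infinity; alternatively an AM--GM estimate like $r|\varphi|\leq \tfrac{1}{4}m_0^2\varphi^2+r^2/m_0^2$ reduces the integrand to a Gaussian times a finite constant, giving an explicit bound. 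Hence $\beta$ is a finite positive number depending only on $m_0$, $\lambda$ and $r$.

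There is essentially no technical obstacle here; the only care required is choosing an auxiliary estimate (either the quartic dominance argument or the AM--GM split) so that the integral defining $\beta$ is manifestly finite and independent of $h$. Once $\beta$ is in hand, the geometric growth bound $\mathbb{E}\,Z^k\leq\beta^k$ is immediate from the fact that $\mu$ is a probability measure supported in $[-r,r]$.
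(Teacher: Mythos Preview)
Your proof is correct and takes a genuinely simpler route than the paper's. You observe directly that $Z(h)$ is uniformly bounded on the compact support of $\mu$, so that $Z(h)\leq\beta$ for all $h\in[-r,r]$ immediately yields $\mathbb{E}\,Z^k\leq\beta^k$. The paper, by contrast, writes out $Z(h)^k$ as a $k$-fold integral over $\vec\varphi=(\varphi_1,\dots,\varphi_k)$, applies H\"older's inequality to bound $|\langle\mathbf{1}_k,\vec\varphi\rangle|\leq k^{3/4}\|\vec\varphi\|_4$, and then optimizes the competition between the quartic term $-\tfrac{\lambda}{4!}\|\vec\varphi\|_4^4$ and the linear term to extract a factor $\exp(kC_\lambda|h|^{4/3})$ before integrating the remaining Gaussian. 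Both arguments end at the same place, but yours avoids the multi-dimensional machinery entirely: since the claim is only about a one-dimensional integral $Z(h)$ on a compact set, a pointwise sup bound suffices. The paper's route gives a slightly more explicit dependence of $\beta$ on $\lambda$ and $r$ through $C_\lambda r^{4/3}$, and its structure would adapt more readily to probability measures with unbounded support but controlled tails; for the lemma as stated, however, your argument is the cleaner one.
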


\begin{proof}
Choose $k\geq 1$.
Using the notation
\begin{align}
  \label{eq:notations}
 \vec \varphi=(\varphi_1,\dots,\varphi_k),\quad
 \left\|\vec\varphi\right\|_p=(|\varphi_1|^p+\dots+|\varphi_k|^p)^{1/p},
  \quad
  \mathbf{1}_k=(1,\dots,1)\in\mathbb{R}^k
\end{align}
we can write
\begin{align}
  \label{eq:mkbound}
  \int d\mu(h)Z(h)^k=\int d\mu(h)\int\prod_{i=1}^kd\varphi_i
  \exp\left(-\dfrac{m_0}{2}\|\vec\varphi\|_2^2-
  \dfrac{\lambda}{4!}\|\vec\varphi\|_4^4-h\langle\mathbf{1}_k,\vec\varphi
  \rangle
  \right).
\end{align}
It follows from H\"older inequality
for the conjugate
exponents $p=4$ and $q=4/3$, that
\begin{align*}
 | \langle\mathbf{1}_k,\vec\varphi
  \rangle|\leq k^{3/4} \|\vec\varphi\|_4.
\end{align*}
For any $\tau\in\mathbb{R}$,
\[
-\dfrac{\lambda}{4!}\tau^4+|h\tau|k^{3/4}\leq 
\dfrac{3}{4}\left(\dfrac{3!}{\lambda}\right)^{1/3}k|h|^{4/3}.
\]
It follows from the two above inequalities and Eq. \eqref{eq:mkbound} that
\begin{align*}
    \int d\mu (h) Z(h)^k&\leq\int d\mu(h)\int \prod_{i=1}^kd\varphi_i
    \exp\left(-\dfrac{m_0}{2}\|\vec\varphi\|_2^2+kC_\lambda|h|^{4/3}\right)\;\text{ for }\;
   C_\lambda=\dfrac{3}{4}\left(\dfrac{3!}{\lambda}\right)^{1/3}.
\end{align*}
For $r>0$ large enough the interval $[-r,r]$ contains the support of $\mu$ and, therefore,
\begin{align*}
    \int d\mu (h) Z(h)^k&\leq
		\int_{[-r,\,r]} d\mu(h)\int \prod_{i=1}^kd\varphi_i
    \exp\left(-\dfrac{m_0}{2}\|\vec\varphi\|_2^2+kC_\lambda|h|^{4/3}\right)\\&\leq
		\exp(kC_\lambda r^{4/3})\int \prod_{i=1}^kd\varphi_i
    \exp\left(-\dfrac{m_0}{2}\|\vec\varphi\|_2^2\right)
\end{align*}
which proves the lemma.
\end{proof}

Our aim now is to use the representation of $\Phi$ provided in Eq. \eqref{pro1.b} to express
the average free energy of the system as the sum of two
contributions: the first one is
a series in which all the
integer moments of the partition function of the model contribute; the second one, which can not be
written as a series of the integer moments, can be made as small as desired.
We will show that such a representation can be obtained whenever the probability distribution $\mu$, 
has compact support. 

\begin{theorem}
\label{th:main}
If $\mu$ has compact support, then for any $a>0$,
\begin{equation}
F_q=\sum_{k=1}^\infty \frac{(-1)^{k+1}a^{k}}{k!k}\,
\mathbb{E}\,{Z^{\,k}}
-\bigr(\ln(a)+\gamma\bigl)+R(a),
\label{m23e}
\end{equation}
where $\gamma=0.577\dots$ is Euler's constant and 
\begin{equation}
R(a)=-\int d\mu(h)\int_{a}^{\infty}\,\dfrac{dt}{t}\, e^{-Z(h)t},\qquad
|R(a)| \leq \dfrac{1}{Z(0)a}\exp\big(-Z(0)a\big).
\label{m24}
\end{equation}
\end{theorem}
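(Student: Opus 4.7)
The plan is to apply Theorem \ref{th1} together with the Laplace-transform representation of $\Phi$ in Eq.\ \eqref{pro1.b}, splitting the $t$-integral at $a$ and expanding the small-$t$ part in a Taylor series. Concretely, I would start from the identity
\[
\frac{1}{Z(h)^s}=\frac{1}{\Gamma(s)}\int_0^\infty t^{s-1}e^{-Z(h)t}\,dt,\qquad \real(s)>0,
\]
split the inner integral as $\int_0^a+\int_a^\infty$, and on $[0,a]$ expand $e^{-Z(h)t}=\sum_{k\geq 0}(-Z(h)t)^k/k!$. Tonelli's theorem justifies interchanging the sum with $\int_0^a t^{s-1}\,dt$ because $e^{Z(h)t}$ is integrable against $t^{\real(s)-1}$ on $[0,a]$, yielding
\[
\frac{1}{Z(h)^s}=\frac{1}{\Gamma(s)}\sum_{k=0}^\infty\frac{(-Z(h))^k a^{s+k}}{k!(s+k)}+\frac{1}{\Gamma(s)}\int_a^\infty t^{s-1}e^{-Z(h)t}\,dt.
\]

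Differentiating at $s=0^+$, using $1/\Gamma(s)=s+\gamma s^2+O(s^3)$, collapses the $k=0$ term to $a^s/\Gamma(s+1)$, whose derivative at $s=0$ is $\ln a+\gamma$; for $k\geq 1$ the product rule produces $(-Z(h))^k a^k/(k!k)$, since $1/\Gamma(s)$ vanishes at $s=0$ with derivative $1$; and the tail integral contributes $\int_a^\infty t^{-1}e^{-Z(h)t}\,dt$. Since the left side derivative equals $-\ln Z(h)$, we recover the exponential-integral identity
\[
-\ln Z(h)=\ln a+\gamma+\sum_{k=1}^\infty\frac{(-Z(h))^k a^k}{k!k}+\int_a^\infty\frac{e^{-Z(h)t}}{t}\,dt.
\]
Integrating against $d\mu(h)$ and using $F_q=\int d\mu(h)\ln Z(h)$ then yields Eq.\ \eqref{m23e}. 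The interchange of the series with the $\mu$-integral is justified by Tonelli because Lemma \ref{lm:bm} gives $\mathbb{E}Z^k\leq\alpha\beta^k$, making $\sum_k\alpha(a\beta)^k/(k!k)$ convergent for every $a>0$.

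For the bound on $R(a)$, the estimate $\int_a^\infty t^{-1}e^{-Z(h)t}\,dt\leq e^{-Z(h)a}/(Z(h)a)$, combined with the monotonicity of $Z\mapsto e^{-Za}/Z$ on $(0,\infty)$ and Lemma \ref{lm:t0}, gives $|R(a)|\leq e^{-Z(0)a}/(Z(0)a)$ after integrating the constant against $\mu$. The main obstacle is the bookkeeping in the term-by-term differentiation at $s=0^+$: both the $k$-series and the tail integral are multiplied by $1/\Gamma(s)$, and one must verify that the $O(s^2)$ remainder of $1/\Gamma(s)$ does not contribute and that $d/ds$ can be moved past the infinite sum in $k$. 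The latter follows from Lemma \ref{lm:bm} via uniform convergence of the differentiated series on a neighbourhood of $s=0^+$.
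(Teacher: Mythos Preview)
Your proposal is correct and follows essentially the same route as the paper: the Mellin representation split at $t=a$, Taylor expansion on $[0,a]$, differentiation using the simple pole of $\Gamma(s)$ at $s=0$, Lemma~\ref{lm:bm} to justify the sum--$\mu$-integral interchange, and Lemma~\ref{lm:t0} for the bound on $R(a)$. The only cosmetic difference is that you work pointwise in $h$ (deriving the exponential-integral identity for $\ln Z(h)$ and then integrating against $\mu$), whereas the paper integrates over $\mu$ first to form $\Phi=\Phi_1+\Phi_2$ and then differentiates each piece, invoking Theorem~\ref{th1}; the ingredients and justifications are identical.
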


\begin{proof}[Proof of Theorem~\ref{th:main}]
take $a>0$ and write $\Phi=\Phi_1+\Phi_2$ where
\begin{align}
\label{m24}
\begin{aligned}
\Phi_{1}(s;a)&=\frac{1}{\Gamma(s)}\int d\mu(h)\int_{0}^{a}\,dt\, t^{s-1}e^{-Z(h)t},
\\
\Phi_{2}(s;a)&=\frac{1}{\Gamma(s)}\int d\mu(h)\int_{a}^{\infty}\,dt\, t^{s-1}e^{-Z(h)t}.
\end{aligned}
\end{align}
In view of Lemma~\ref{lm:t0}, the integral $\Phi_{2}(s;a)$ defines a 
an analytic function in $s$ on the whole complex plane.
Moreover,
\begin{align*}
  \begin{aligned}
    \dfrac{d}{ds}\Phi_2(s;a)
    &=\left(\dfrac{d}{ds}\frac{1}{\Gamma(s)}\right)\int d\mu(h)\,
    \int_{a}^{\infty}\,dt\, t^{s-1}e^{-Z(h)t}
    \\
    &\quad+\frac{1}{\Gamma(s)}\int
    d\mu(h)\int_{a}^{\infty}\,dt\, t^{s-1}\ln t\,e^{-Z(h)t}.
  \end{aligned}
\end{align*}
Since  $\Gamma(s)$ has a first-order pole at $s=0$ with residue $1$, 
\begin{equation}
-\dfrac{d}{ds}\Phi_{2}(s;a)|_{s=0}=-\int d\mu(h)\int_{a}^{\infty}\,\dfrac{dt}{t}\, e^{-Z(h)t}
=R(a).
\label{m24.1}
\end{equation}
Direct use of the definition of $R(a)$ and of  Lemma~\ref{lm:t0} yields the bound
\begin{equation}
\left|R(a)\right| \leq
\int d\mu(h)\int_{a}^{\infty}\,\dfrac{dt}{t}\, e^{-Z(0)t}
\leq\dfrac{1}{Z(0)a}\exp\big(-Z(0)a\big).
\label{m24.b}
\end{equation}

In the innermost integral in $\Phi_{1}(s;a)$ the series representation for
the exponential converges uniformly (for each $h$). Since the domain of
this integral is bounded, we can interchange the order of this integration with the
summation of the series to obtain
\begin{align}
  \begin{aligned}
    \Phi_{1}(s;a)&=\int d\mu(h)\frac{1}{\Gamma(s)}\sum_{k=0}^\infty
    \frac{(-1)^{k}a^{k+s}}{k!(k+s)}Z(h)^k\\
   &=\frac{a^s}{\Gamma(s+1)}+
\frac{1}{\Gamma(s)}\int d\mu(h)\sum_{k=1}^\infty
    \frac{(-1)^{k}a^{k+s}}{k!(k+s)}Z(h)^k.
  \end{aligned}
\label{m23b}
\end{align}
where the second equality is obtained by direct integration of
the term $k=0$ and the use of the identity 
 $\Gamma(s)s=\Gamma(s+1)$.
%
We claim that for $\real(s)>-1$
\begin{align}
\sum_{k=1}^\infty\int d\mu(h)\sum_{k=1}^\infty
    \left|\frac{a^{k+s}}{k!(k+s)}Z(h)^k\right|=\sum_{k=1}^\infty
    \frac{|a^{k+s}|}{k!|k+s|}\int d\mu(h)Z(h)^k<\infty.
\label{m23bb}
\end{align}
The equality follows from the monotone convergence theorem
applied to the partial sums of the series of \emph{nonegative} functions in the
first integral while the inequality follows from Lemma~\ref{lm:bm}.
In view of the above inequality, we can interchange the order of integration and
sum on the right hand-side of the last equality of Eq. \eqref{m23b} to obtain
\begin{align}
  \begin{aligned}
    \Phi_{1}(s;a)&=\frac{a^s}{\Gamma(s+1)}+
\frac{1}{\Gamma(s)}\sum_{k=1}^\infty
    \frac{(-1)^{k}a^{k+s}}{k!(k+s)}\mathbb{E}\,Z^{k}.
  \end{aligned}
\label{m23b1}
\end{align}
It follows from Lemma~\ref{lm:bm} that the series in the above equation is analytic
for $\real(s)>-1$. Therefore, using again the fact that $\Gamma(s)$ has a first
order pole in $0$ with residue 1 we conclude that
\begin{equation}
-\dfrac{d}{ds}\Phi_{1}(s)|_{s=0^{\,+}}=\sum_{k=1}^\infty \frac{(-1)^{k+1}a^{k}}{k!k}\,
\mathbb{E}\,{Z^{\,k}}
+f(a),
\label{m23d}
\end{equation}
where
\begin{equation}
f(a)=-\dfrac{d}{ds}\left(\dfrac{a^s}{\Gamma(s+1)}\right)|_{s=0}
=-\bigl(\ln(a)+\gamma\bigr)
\label{m23e}
\end{equation}
and $\gamma$ is Euler's constant $ 0.577\dots$
To end the proof, combine Eqs. \eqref{m24}, \eqref{m24.1},  \eqref{m23d}, and \eqref{m23e} with Theorem~\ref{th1}
and use the bound provided by Eq. \eqref{m24.b}.
\end{proof}

A case of special interest is $a=1$ in  Theorem~\ref{th:main};  with this choice
the average free energy
can be written as
\begin{equation}
F_q=\sum_{k=1}^\infty \frac{(-1)^{k+1}}{k!k}\,
\mathbb{E}\,{Z^{\,k}}
-\gamma+R(1),
\qquad \left|R(1)\right|\leq \dfrac{1}{Z(0)}\exp\big(-Z(0)\big).
\label{m23f}
\end{equation}
Observe that in Theorem~\ref{th:main} one cannot take the limit $a\to\infty$, because in this case the
series in Eq. \eqref{m23e} become meaningless.
Nevertheless, for $Z(0)$ bounded away from zero the contribution of $R(a)$ to the free energy can be made
as small as desired, taking $a$ large enough.
%

Note that the representation of the average free energy just by a series on the integer moments of the  partition function
would not describe the behavior of the free energy 
when $Z(h)\to 0$; the divergence of the free energy in this case comes from $R(a)=-(d/ds)\Phi_2(s)$
as revealed by direct inspection of Eq. (\ref{m24}). The contribution to the free energy due to the series expansion
captures its non-analytic behaviour when $Z(h)\to\infty$.
Note that the Eq. \eqref{m23b} does not require compactness of $\mu$'s support.


\section{Conclusions}

\quad

There is a growing interest in disordered systems in physics and many areas beyond physics.
The usual approach to study such disordered systems is to transform the random problem into a
translational invariant one. For quenched disorder, one is mainly interested in
averaging the free energy over the disorder, which amounts to averaging the log of the partition function $Z$,
the connected vacuum to vacuum diagrams.

The replica method is a powerful tool used to calculate the free energy of systems with quenched disorder.
Despite the absence of a mathematically rigorous derivation, the standard replica method provides correct results in many situations.
It is  natural to ask if it
is possible to find a mathematically rigorous derivation which legitimates the use of the replica partition functions for
computing the average free energy of the system. 

In this paper we use the distributional zeta-function technique to obtain a
a mathematically rigorous derivation of the average free energy of the zero-dimensional $\lambda\varphi^{4}$ model. Contrary to the standard
replica method, our method neither involves derivation of the integer moments of the partition function with respect to those indices, nor 
extension of these derivatives to non-integers values.
The derivative of the distributional zeta-function at $s=0$ yields the average free energy.
Making use of the Mellin transform
and analytic continuation,
it is possible to obtain a
series representation for the average free energy where all the
integer moments of the partition function of the model contribute. 
The average free energy of the system is the sum of two
contributions: the first one is
a series in which all the
integer moments of the partition function of the model contribute; the second one, which can not be
written as a series of the integer moments, can be made as small as desired.

\section{Acknowlegements}

We would like to acknowledge G. Krein, T. Miclitz and
G. Menezes for the fruitful discussions.
This paper was supported by Conselho Nacional de
Desenvolvimento Cientifico e Tecnol{\'o}gico do Brazil (CNPq).


\begin{thebibliography}{99}

\bibitem{binder} K. Binder and A. P. Young, Rev. Mod. Phys. {\bf 58}, 801 (1986).
\bibitem{livro1} M. Mezard, G. Parisi and M. Virasoro, {\em{``Spin-Glass Theory and Beyond"}}, World Scientific, (1987).
\bibitem{livro2} T. Nattterman, in {\em{``Spin-Glasses and Random Fields"}}, A. P. Young (Editor), World Scientific (1988).
\bibitem{livro3} V. Dotsenko, {\em{``Introduction to the Replica Theory in Disordered Statistical Systems"}},
Cambridge University Press (2001).
\bibitem{livro4} C. De Dominicis and I. Giardina, {\em{``Random Fields and Spin Glass"}}, Cambridge University Press (2006).
\bibitem{pe2} L. H. Ford, Phys. Rev. {\bf D51}, 1692 (1995).
\bibitem{pe10} L. H. Ford and N. F. Svaiter, Phys. Rev. {\bf D54}, 2640 (1996)
\bibitem{pe11} L. H. Ford and N. F. Svaiter, Phys. Rev. {\bf D56}, 2226 (1997).
\bibitem{dis1} G. Krein, G. Menezes and N. F. Svaiter, Phys. Rev. Lett. {\bf 105}, 131301 (2010).
\bibitem{dis2} G. Krein, G. Menezes, E. Arias and N. F. Svaiter, Int. Jour. Mod. Phys. {\bf 27A}, 1250129 (2012).
\bibitem{dis3} C. G. H. Bessa, J. G. Due\~nas and and N. F. Svaiter, Class. Quant. Grav. {\bf 29}, 215011 (2012).
\bibitem{jpa} J. G. Due\~nas and N. F. Svaiter, Jour. Phys. {\bf A48}, 315201 (2015).
\bibitem{dynamical} C. De Dominicis, Phys. Rev. {\bf B18}, 4913 (1978).
\bibitem{cavity} M. M\'ezard, G. Parisi and M. A. Virasoro, Europhys. Lett. {\bf 1}, 77 (1985).
\bibitem{livromez} M. M\'ezard and A. Montarini, {\em{``Information, Physics and Computation"}}, Oxford University Press (2008).
\bibitem{edwards} S. F. Edwards and P. W. Anderson, J. Phys. {\bf F5}, 965 (1975).
\bibitem{pa1} G. Parisi, Phys. Rev. Lett. {\bf 43}, 1754 (1979).
\bibitem{pa2} G. Parisi, J. Phys. {\bf A13}, L-115 (1980).
\bibitem{pa3} G. Parisi, Phys. Rev. Lett. {\bf 50}, 1946 (1983).
\bibitem{cri1} J. L. van Hemmen and R. G. Palmer, J. Phys. {\bf A12}, 563 (1979).
\bibitem{cri2} J. J. M. Verbaarschot and M. R. Zirnbauer, J. Phys. {\bf A17}, 1093 (1985).
\bibitem{cri4} M. R. Zirnbauer, {\em{"Another Critic of the Replica Trick"}}, arXiv:cond-mat/9903338 (1999).
\bibitem{cri3} V. Dotsenko, Phylosophical Magazine {\bf 92}, 16 (2012).
\bibitem{sum} V. Dotsenko, Europhys. Lett. {\bf 95}, 50006 (2011).
\bibitem{virasoro} M. Campellone, G. Parisi and M. A. Virasoro, J. Stat. Phys. {\bf 138}, 29 (2010).
\bibitem{distributional} B. F. Svaiter and N. F. Svaiter, {\em{``The Distributional Zeta-Function in Disordered Field Theory"}}, arXiv:cond-mat/16035909, (2016).
\bibitem{argyres} E. N. Argyres, A. F. W. van Hameren, R. H. P. Kleiss and C. G. Papadopoulus, Eur. Phys. J. {\bf 19}, 567 (2001)
\bibitem{rivasseau} V. Rivasseau, Advances in Math. {\bf 180159} (2009).
\bibitem{dyson} F. J. Dyson, Phys. Rev. {\bf 85}, 631 (1952).
\bibitem{lipatov}L. N. Lipatov, Sov. Phys. JETP {\bf 45}, 216 (1977). 
\bibitem{sce2} A. P. C. Malbouisson, R. Portugal and N. F. Svaiter, Physica {\bf A292}, 485 (2001).
\bibitem{sce1} N. F. Svaiter, Physica {\bf A345}, 517 (2005).
\bibitem{sce3} M. Brown and I. Whittingham, Nucl. Phys. {\bf B900}, 477 (2015).
\bibitem{sum1} A. J. Bray, T. McCarthy, M. A. Moore, J. D. Reger and A. P. Young, Phys. Rev. {\bf B36}, 2212 (1987).
\bibitem{sum2} A. J. McKane, Phys. Rev. {\bf B49}, 12003 (1994).
\bibitem{s1} S. Kovesi-Domokos, Il Nuovo Cim. {\bf 33A}, 769 (1976). 
\bibitem{s2} C. M. Bender, F. Cooper, G. S. Guralnik, D. H. Sharp, Phys. Rev. {\bf D19}, 1865 (1979).
\bibitem{s3} N. Parga, D. Toussaint, J. R. Fulco, Phys. Rev. {\bf D20}, 887 (1979).
\bibitem{s4} C. M. Bender, F. Cooper, G. S. Guralnik, D. H. Sharp,R. Roskies and M. L. Silverstein,  Phys. Rev. {\bf D20}, 1374 (1979).
\bibitem{thooft} G. 't Hooft, Nucl. Phys. {\bf B72}, 461 (1974).
\bibitem{brezin} E. Br\'ezin, C. Itzykson, G. Parisi and J. B. Zuber, Comm.  Math. Phys. {\bf 59}, 35 (1978).
\bibitem{elizalde} E. Elizalde, {\em{``Ten Physical Applications of Spectral Zeta Functions"}}, Springer Verlag, Berlin Heidelberg (2012).
\bibitem{riem} B. Riemann, Monatsberichte der Berliner Akademie, 671 (1859).
\bibitem{riem2} A. E. Ingham, {\em``The Distribution of Prime Numbers"}, Cambridge University Press, Cambridge (1990).
\bibitem{landau} V. E. Landau and A. Walfisz, Rend. Circ. Mat. Palermo, {\bf 44}, 82 (1919).
\bibitem{fro} C. E. Fr\"oberg, BIT {\bf 8}, 187 (1968).
\bibitem{primes0} G. Menezes and N. F. Svaiter, arXiv: 1211.5198,
{\em{``Quantum field theory and prime numbers spectrum"}},  (2011).
\bibitem{primes} G. Menezes, B. F. Svaiter and N. F. Svaiter, Int. Jour. Mod. Phys. {\bf A28}, 1350128 (2013).
\bibitem{voros} A. Voros, {\em{``Zeta Functions ove Zeros of Zeta Functions"}}, Springer Verlag, Berlin, Heidelberg (2010).
\bibitem{seeley} R. T. Seeley, Am. Math. Soc. Proc. Symp. Pure Math. {\bf 10}, 288 (1966).
\bibitem{ray} D. B. Ray and I. M. Singer, Advances in Math. {\bf 7}, 145 (1971).
\bibitem{dowker} J. S. Dowker and R. Crichley, Phys. Rev. {\bf D13}, 224 (1976).
\bibitem{hawking} S. W. Hawking, Comm. Math. Phys. {\bf 55}, 133 (1974).
\bibitem{fulling} S. A. Fulling, J. Phys. {\bf A36}, 6857 (2003).

\end{thebibliography}
\end{document}